\theoremstyle{plain}
\newtheorem{thm}{Theorem}[section]
\newtheorem{lemma}[thm]{Lemma}
\newtheorem{prop}[thm]{Proposition}
\newtheorem{cor}[thm]{Corollary}
\theoremstyle{definition}
\newtheorem{counterexample}[thm]{Counter-Example}
\theoremstyle{remark}
\newtheorem{remark}[thm]{Remark}
\DeclareFontFamily{U}{mathb}{\hyphenchar\font45}
\DeclareFontShape{U}{mathb}{m}{n}{
  <5> <6> <7> <8> <9> <10> gen * mathb
  <10.95> mathb10 <12> <14.4> <17.28> <20.74> <24.88> mathb12
}{}
\DeclareSymbolFont{mathb}{U}{mathb}{m}{n}
\DeclareMathSymbol{\drsh}{3}{mathb}{"EB}
\newlength{\edgelength}
\newcommand{\trans}[4]{%
  \begin{tikzpicture}[auto, shorten >=1pt, >=latex, baseline=(l.base), inner sep=0pt, outer xsep=0.3333em]
    \node (l) {\ensuremath{#1}};%
    \setlength{\edgelength}{\widthof{\scriptsize\ensuremath{#2/#3}}+0.5cm}%
    \node[base right=\edgelength of l] (r) {\ensuremath{#4}};%
    \path[->] (l.mid east) edge node[inner sep=0pt] {\scriptsize\ensuremath{#2/#3}} (r.mid west);%
  \end{tikzpicture}%
}
\newcommand{\transa}[3]{%
  \begin{tikzpicture}[auto, shorten >=1pt, >=latex, baseline=(l.base), inner sep=0pt, outer xsep=0.3333em]
    \node (l) {\ensuremath{#1}};%
    \setlength{\edgelength}{\widthof{\scriptsize\ensuremath{#2}}+0.5cm}%
    \node[base right=\edgelength of l] (r) {\ensuremath{#3}};%
    \path[->] (l.mid east) edge node[inner xsep=0pt, inner ysep=0.2em] {\scriptsize\ensuremath{#2}} (r.mid west);%
  \end{tikzpicture}%
}
\DeclareMathOperator{\id}{id}
\DeclareMathOperator{\Stab}{Stab}
\DeclareMathOperator{\Pre}{Pre}
\DeclareMathOperator{\Suf}{Suf}
\newcommand{\problem}[3][]{%
  \par\vspace{0.125cm plus 0.05cm minus 0.05cm}\begin{tabularx}{\textwidth-2\parindent}{lX}%
    \if\relax\detokenize{#1}\relax%
    \else%
      \textnormal{\textbf{Constant:}}&#1\\%
    \fi%
    \textnormal{\textbf{Input:}}&#2\\%
    \textnormal{\textbf{Question:}}&#3\\%
  \end{tabularx}\vspace{0.125cm plus 0.05cm minus 0.05cm}\par%
  }
\author{Daniele D'Angeli\thanks{The first author was supported by the Austrian Science Fund project FWF P29355-N35.}}
\affil{Università degli Studi Niccolò Cusano\\
  Via Don Carlo Gnocchi, 3\\
  00166 Roma, Italy}
\author{Dominik~Francoeur\thanks{The second author was supported by a Doc.Mobility grant from the Swiss National Science Foundation as well as the "@raction" grant ANR-14-ACHN-0018-01 during a visit at the École Normale Supérieure in Paris.}}
\affil{Section de Math\'{e}matiques\\
  Universit\'{e} de Gen\`{e}ve\\
  2-4 rue du Lièvre\\
  1211 Genève 4, Switzerland}
\author{Emanuele~Rodaro}
\affil{Department of Mathematics\\
  Politecnico di Milano\\
  Piazza Leonardo da Vinci, 32\\
  20133 Milano, Italy}
\author{Jan~Philipp~Wächter}
\affil{Institut für Formale Methoden der Informatik (FMI)\\
  Universität Stuttgart\\
  Universitätsstraße 38\\
  70569 Stuttgart, Germany}
\title{Infinite Automaton Semigroups and Groups Have Infinite Orbits}
\begin{document}
  \maketitle

  \vspace{-2\baselineskip}

  \begin{abstract}
    We show that an automaton group or semigroup is infinite if and only if it admits an $\omega$-word (i.\,e.\ a right-infinite word) with an infinite orbit, which solves an open problem communicated to us by Ievgen V.\ Bondarenko. In fact, we prove a generalization of this result, which can be applied to show that finitely generated subgroups and subsemigroups as well as principal left ideals of automaton semigroups are infinite if and only if there is an $\omega$-word with an infinite orbit under their action. The proof also shows some interesting connections between the automaton semigroup and its dual. Finally, our result is interesting from an algorithmic perspective as it allows for a re-formulation of the finiteness problem for automaton groups and semigroups.\\
    \textbf{Keywords.} Automaton Groups, Automaton Semigroups, Orbits, Schreier Graphs, Orbital Graphs, self-similar
  \end{abstract}

  \begin{section}{Introduction}\enlargethispage{2\baselineskip}
    Automaton groups and self-similar groups became very popular after the introduction of the famous Grigorchuk group. It was the first example of a group with intermediate growth (i.e. faster than polynomial but slower than exponential), and it also has many other interesting properties. For example, it is infinite and finitely generated but each of its elements has finite order. It was also the first example of an amenable but not elementary amenable group. Soon after its introduction, it started to become clear that the most natural way to study this kind of groups is by their action on an infinite regular rooted tree, an approach which has given rise to an entirely new direction of research focusing on finitely generated groups acting by automorphisms on rooted trees and described by finite automata. Although this research revealed many interesting -- and sometimes surprising -- results about this class of so called automaton groups, the overall knowledge about them from an algebraic, algorithmic or dynamical point of view still remains limited. The dynamical view here primarily means the study of how automaton groups, which are always countable because they are finitely generated, act on the uncountable set of right infinite words, which is homeomorphic to the Cantor set. Further details can be found, for example, in \cite{MR2643891, DAngeli2019boundary, DynSubgroup, GriSa13, GrigSav16, nekrashevych2005self}. Since right infinite words can be seen as infinite paths starting at the root of an infinite regular rooted tree, this set is also often referred to as \emph{the boundary} of this tree. The action of automaton groups on the boundary seems to be very rich and best described by the structure of the corresponding Schreier graphs. In the generalized setting of semigroups instead of groups, the concept of Schreier graphs can naturally be extended into orbital graphs. One of the most natural questions in this frame is how the algebraic structure of an automaton semigroup or group influences its dynamical properties. For example, is there an infinite automaton group having only finite Schreier graphs in the boundary?
    
    \enlargethispage{2\baselineskip}
    The main result of this paper is to show that this question (communicated to us by Ievgen V.\ Bondarenko) has a negative answer; we do this in \autoref{sec:InfiniteOrbits}. In fact, our result is stronger: if one takes a subset of an automaton semigroup given by a suffix-closed language in the generators, then this subset is infinite if and only if there is an $\omega$-word with an infinite orbit under its action\footnote{Unfortunately, our proof is purely existential and does not give many information about the structure of words with infinite orbits. Some basic results towards this direction can be found in \cite{expandabilityPart}.}. This is a stronger result because, firstly, it holds in the more general setting of automaton semigroups instead of groups and, secondly, it makes a statement about certain subsets instead of only the whole automaton semigroup. These subsets include the automaton semigroup or group itself, but also finitely generated subsemigroups and subgroups -- which do not need to be automaton semigroups or groups themselves -- as well as principal left (semigroup) ideals. On the other hand, we see that we cannot generalize the result to self-similar semigroups or groups (i.\,e.\ those generated by automata with possibly infinitely many states) in \autoref{sec:SelfSimilar}.
    
    On the algorithmic side, we immediately obtain that asking whether a given automaton group or semigroup is infinite -- the so called \emph{finiteness problem} -- is equivalent to asking whether there is an $\omega$-word with an infinite orbit under the action of the automaton. While Gillibert showed that the finiteness problem for automaton semigroups is undecidable \cite{Gilbert13} (see also \cite[Theorem~4.7]{decidabilityPart} for a strengthened result), the corresponding problem for automaton groups remains an important open problem in the area \cite[7.2 b)]{GriNeShu} and, by the just mentioned connection, our result allows to consider this problem from a new perspective.
    
    The proof of our main result is heavily based on using the so-called dual automaton. We show that there is a connection between the size of certain subgraphs of Schreier graphs of an automaton group and the size of the corresponding Schreier graphs in the dual. Again, we actually show this in the more general setting of automaton semigroups (by generalizing the notion of Schreier graphs). An interesting special case are Schreier graphs for the stabilizer of a single $\omega$-word. The connection also allows to elegantly re-prove and generalize the known result that an element of an automaton group has infinite order if and only if its orbit under the action of the dual is infinite.

  \end{section}

  \begin{section}{Preliminaries}\label{sec:preliminaries}
    \paragraph{Fundamentals and Words.}
    We assume the reader to be familiar with basic notions from semigroup and group theory such as finite generation and inverses (in the group sense). We say an element $s$ of a semigroup $S$ has \emph{torsion} if there are $i, j \geq 1$ with $i \neq j$ but $s^i = s^j$; if an element does not have torsion, it is \emph{torsion-free}. This is connected to the \emph{order} of a group element $g$: it is the smallest number $i \geq 1$ such that $g^i$ is the neutral element of the group; if there is no such $i$, then the element has infinite order. Obviously, an element of a group is of infinite order if and only if it is torsion-free.
    
    To denote the disjoint union of two sets $A$ and $B$, we write $A \sqcup B$. For a \emph{partial function} from $A$ to $B$, we write $A \to_p B$. If the function is total, we omit the index $p$. 
    
    A non-empty, finite set $A$ is called an \emph{alphabet}. A finite word is a finite sequence $w = a_1 \dots a_n$ of elements $a_1, \dots, a_n \in A$; its length is $|w| = n$ and its \emph{reverse} is the mirror word\footnote{At first, the notation $\partial u$ for the reverse of $u$ might seem strange but it will make more sense after we have defined the dual of an automaton below.} $\partial w = a_n \dots a_1$. We denote the empty word by $\varepsilon$, write $A^*$ for the set of finite words over $A$ and write $A^+ = A^* \setminus \{ \varepsilon \}$. For single symbols, we also use the notation $a^*$ instead of $\{ a \}^*$ for the set $\{ a^i \mid i = 0, 1, \dots \}$. A \emph{language} $L$ is a subset of $A^*$ and its \emph{reverse} is $\partial L = \{ \partial w \mid w \in L \}$.
    
    In addition to finite words, we will also consider right-infinite sequences over $A$. Such a sequence $\alpha = a_1 a_2 \dots$ with $a_1, a_2, \dots \in A$ is called an \emph{$\omega$-word} over $A$. The set of $\omega$-words over $A$ is denoted by $A^\omega$. The term \emph{word} refers to both
     finite and $\omega$-words. Finally, we will also consider (the somewhat less standard) left-infinite sequences over $A$. However, we do not use the term word for them. Such left-infinite sequences will usually arise as the reverse $\dots a_2 a_1 = \partial \alpha$ of some $\omega$-word $\alpha = a_1 a_2 \dots$ with $a_1, a_2, \dots \in A$.
    
    A word $u$ is called a \emph{suffix} of another word $w$ if there is some finite word $x$ with $w = xu$. Symmetrically, $u$ is a \emph{prefix} of $w$ if there is a word $x$ with $w = ux$. A language $L$ is \emph{suffix-closed} if $w \in L$ implies that every suffix of $w$ is in $L$ as well; similarly, it is \emph{prefix-closed} if $w \in L$ implies $u \in L$ for all suffixes $u$ of $w$. We use $\Pre \alpha$ as the set of finite prefixes of some $\omega$-word $\alpha$. Similarly, we set $\Suf \partial \alpha = \partial \Pre \alpha$; the idea here is that $\Suf \partial \alpha$ is the set of suffixes of the left-infinite sequence $\partial \alpha$.
    
    \paragraph{Automata.}
    An $S$-automaton is a partial, finite-state, letter-to-letter transducer (without initial or final states). More formally, an \emph{$S$-automaton}\footnote{The name $S$-automaton comes from the fact that these automata generate semigroups as we will see shortly.} is a triple $\mathcal{T} = (Q, \Sigma, \delta)$ where $Q$ is an alphabet whose elements we call \emph{states}, $\Sigma$ is an alphabet as well and $\delta \subseteq Q \times \Sigma \times \Sigma \times Q$ is a set of \emph{transitions} such that for every pair $p \in Q$ and $a \in \Sigma$ the set $\{ \trans{p}{a}{b}{q} \in \delta \mid b \in \Sigma, q \in Q \}$ contains at most one element (i.\,e.\ we require the automaton to be \emph{deterministic}). Here, we have used the more graphical notation $\trans{q}{a}{b}{p}$ instead of $(q, a, b, p) \in Q \times \Sigma \times \Sigma \times Q$ for transitions. If the set $\{ \trans{p}{a}{b}{q} \in \delta \mid b \in \Sigma, q \in Q \}$ contains at least (and, thus, exactly) one element for every pair $p \in Q$ and $a \in \Sigma$, then $\mathcal{T}$ is \emph{complete}. Note that we do not require an $S$-automaton to be complete in general.
    
    When depicting an automaton graphically, we use the standard notation
    \begin{center}
      \begin{tikzpicture}[baseline=(p.base), auto, >=latex]
        \node[state] (p) {$p$};
        \node[state, right=of p] (q) {$q$};
        \draw[->] (p) edge node {$a / b$} (q);
      \end{tikzpicture}
    \end{center}
    to indicate that it contains a transition $\trans{p}{a}{b}{q}$. In addition, we also use \emph{cross diagrams}\footnote{They were originally introduced in \cite{aklmp12} -- where they are attributed to the square diagrams of \cite{glasner2005Automata} -- but seem to become more and more widespread lately.} to indicate transitions of automata. A transition $\trans{p}{a}{b}{q} \in \delta$ of some $S$-automaton $\mathcal{T} = (Q, \Sigma, \delta)$ is represented by the cross diagram
    \begin{center}
      \begin{tikzpicture}[baseline=(m-3-2.base)]
        \matrix[matrix of math nodes, text height=1.25ex, text depth=0.25ex] (m) {
          & a & \\
          p & & q \\
          & b & \\
        };
        \foreach \i in {1} {
          \draw[->] let
            \n1 = {int(2+\i)}
          in
            (m-2-\i) -> (m-2-\n1);
          \draw[->] let
            \n1 = {int(1+\i)}
          in
            (m-1-\n1) -> (m-3-\n1);
        };
      \end{tikzpicture}.
    \end{center}
    Multiple transitions can be combined into a single cross diagram.  For example, the cross diagram
    \begin{center}
      \begin{tikzpicture}
        \matrix[matrix of math nodes, text height=1.25ex, text depth=0.25ex] (m) {
                   & a_{0, 1}     &          & \dots &              & a_{0, m}     &     \\
          q_{1, 0} &              & q_{1, 1} & \dots & q_{1, m - 1} &              & q_{1, m} \\
                   & a_{1, 1}     &          &       &              & a_{1, m}     &     \\
            \vdots & \vdots       &          &       &              & \vdots       & \vdots \\
                   & a_{n - 1, 1} &          &       &              & a_{n - 1, m} &     \\
          q_{n, 0} &              & q_{n, 1} & \dots & q_{n, m - 1} &              & q_{n, m} \\
                   & a_{n, 1}     &          & \dots &              & a_{n, m}     &     \\
        };
        \foreach \j in {1, 5} {
          \foreach \i in {1, 5} {
            \draw[->] let
              \n1 = {int(2+\i)},
              \n2 = {int(1+\j)}
            in
              (m-\n2-\i) -> (m-\n2-\n1);
            \draw[->] let
              \n1 = {int(1+\i)},
              \n2 = {int(2+\j)}
            in
              (m-\j-\n1) -> (m-\n2-\n1);
          };
        };
      \end{tikzpicture}
    \end{center}
    states that the automaton contains all transitions $\trans{q_{i, j - 1}}{a_{i - 1, j}}{a_{i, j}}{q_{i, j}}$ for $1 \leq i \leq n$ and $1 \leq j \leq m$. Often, we will omit unneeded names for intermediate states or letters in cross diagrams. Since cross diagrams tend to be quite spacious, we introduce a shorthand notation. Omitting the inner states and letters, we abbreviate the above cross diagram by
    \begin{center}
      \begin{tikzpicture}
        \matrix[matrix of math nodes, text height=1.25ex, text depth=0.25ex] (m) {
                                           & u = a_{0, 1} \dots a_{0, m} & \\
          q_{n, 0} \dots q_{1, 0} = \bm{q} &                             & \bm{p} = q_{n, m} \dots q_{1, m} \\
                                           & v = a_{n, 1} \dots a_{n, m} & \\
        };
        \foreach \i in {1} {
          \draw[->] let
            \n1 = {int(2+\i)}
          in
            (m-2-\i) -> (m-2-\n1);
          \draw[->] let
            \n1 = {int(1+\i)}
          in
            (m-1-\n1) -> (m-3-\n1);
        };
      \end{tikzpicture}.
    \end{center}
    An important point to notice here is the order in which we write the states: $q_{n, 0}$ comes last but is written on the left while $q_{1, 0}$ comes first but is written on the right. Later on, it will become clearer why this is the case\footnote{Anticipating the definition below, this is because we let automaton semigroup act on the left.}.
    
    \paragraph{Automaton Semigroups.}\enlargethispage{\baselineskip}
    Let $\mathcal{T} = (Q, \Sigma, \delta)$ be an $S$-automaton. It is easy to see that for every $\bm{p} \in Q^+$ and every $u \in \Sigma^+$ there is at most one cross diagram
    \begin{center}
      \begin{tikzpicture}[baseline=(m-3-2.base)]
        \matrix[matrix of math nodes, text height=1.25ex, text depth=0.25ex] (m) {
                 & u & \\
          \bm{p} &   & \bm{q} \\
                 & v & \\
        };
        \foreach \i in {1} {
          \draw[->] let
            \n1 = {int(2+\i)}
          in
            (m-2-\i) -> (m-2-\n1);
          \draw[->] let
            \n1 = {int(1+\i)}
          in
            (m-1-\n1) -> (m-3-\n1);
        };
      \end{tikzpicture}.
    \end{center}
    If we have such a cross-diagram, we write $\bm{p} \circ u = v$ and $\bm{p} \cdot u = \bm{q}$ and extend the notation by setting $\bm{p} \circ \varepsilon = \varepsilon$, $\bm{p} \cdot \varepsilon = \bm{p}$, $\varepsilon \circ u = u$ and $\varepsilon \cdot u = \varepsilon$. Notice that, in general, we have $\bm{p} \circ u_1 u_2 = (\bm{p} \circ u_1) ((\bm{p} \cdot u_1) \circ u_2)$ for $u_1, u_2 \in \Sigma^*$ (if there is a cross-diagram with $\bm{p}$ on the left and $u_1 u_2$ at the top) and $\bm{p}_2 \circ (\bm{p}_1 \circ u) = \bm{p}_2 \bm{p}_1 \circ u$ for $\bm{p}_1, \bm{p}_2 \in Q^*$ (if there is a cross-diagram with $\bm{p}_2 \bm{p}_1$ on the left and $u$ at the top).
    
    This way, every $\bm{p} \in Q^+$ induces a partial function $\Sigma^* \to_p \Sigma^*$ mapping $u$ to $\bm{p} \circ u$. All these partial functions are length-preserving and prefix-compatible. Therefore, we can extend them into partial functions $\Sigma^* \cup \Sigma^\omega \to_p \Sigma^* \cup \Sigma^\omega$.
    
    By definition, the composition of the partial function induced by some $\bm{p}_2 \in Q^+$ with the partial function induced by some $\bm{p}_1 \in Q^+$ is the partial function induced by $\bm{p}_2 \bm{p}_1$. Therefore, the closure of the partial functions induced by the states in $Q$ under composition is exactly the set of the partial functions induces by all $\bm{p} \in Q^+$. This set forms a semigroup (with composition as operation), which we call the \emph{semigroup generated by $\mathcal{T}$} and denote by $\mathscr{S}(\mathcal{T})$. A semigroup is an \emph{automaton semigroup} if it is generated by some $S$-automaton.
    
    \begin{remark}
      If the $S$-automaton $\mathcal{T}$ is \emph{complete}, then all induced partial functions are total. In fact, in the literature, automaton semigroups are often defined using complete automata only. We will, however, consider the more general case where the automata are allowed to be non-complete as all our results hold in this setting as well. Please note that we usually state our results in the stronger of the two ways.
      
      For a discussion of how automaton semigroups generated by (partial) automata relate to automaton semigroups generated by complete automata, we refer the reader to \cite{structurePart}.
    \end{remark}
    
    \paragraph{Dual Automaton.}\enlargethispage*{3\baselineskip}
    Let $\mathcal{T} = (Q, \Sigma, \delta)$ be an $S$-automaton. We define the \emph{dual} of $\mathcal{T}$ as the $S$-automaton $\partial \mathcal{T} = (\Sigma, Q, \partial \delta)$ with the transitions
    \[
      \partial \delta = \{ \trans{a}{q}{p}{b} \mid \trans{q}{a}{b}{p} \in \delta \} \text{.}
    \]
    In other words, we exchange the roles of letters and states. The reader may verify that $\partial \mathcal{T}$ is indeed an $S$-automaton and that $\partial \mathcal{T}$ is complete if and only if $\mathcal{T}$ is.
    
    By construction, we obtain that $\mathcal{T}$ admits the cross diagram
    \begin{center}
      \begin{tikzpicture}[baseline=(m-4-1.base)]
        \matrix[matrix of math nodes, text height=1.25ex, text depth=0.25ex] (m) {
                 & a_1 &    & \dots &    & a_m &     \\
          p_1    &     & {} & \dots & {} &     & q_1 \\
                 & {}  &    &       &    & {}  &     \\[-1.5ex]
          |[inner sep=0pt]|\vdots & |[inner sep=0pt]|\vdots &    &       &    & |[inner sep=0pt]|\vdots & |[inner sep=0pt]|\vdots \\[-2.75ex]
                 & {}  &    &       &    & {}  &     \\
          p_n    &     & {} & \dots & {} &     & q_n \\
                 & b_1 &    & \dots &    & b_m &     \\
        };
        \foreach \j in {1, 5} {
          \foreach \i in {1, 5} {
            \draw[->] let
              \n1 = {int(2+\i)},
              \n2 = {int(1+\j)}
            in
              (m-\n2-\i) -> (m-\n2-\n1);
            \draw[->] let
              \n1 = {int(1+\i)},
              \n2 = {int(2+\j)}
            in
              (m-\j-\n1) -> (m-\n2-\n1);
          };
        };
      \end{tikzpicture} or, in shorthand notation,
      \begin{tikzpicture}[baseline=(m-2-3.base)]
        \matrix[matrix of math nodes, text height=1.25ex, text depth=0.25ex] (m) {
                 & u & \\
          \bm{p} &   & \bm{q} \\
                 & v & \\
        };
        \foreach \i in {1} {
          \draw[->] let
            \n1 = {int(2+\i)}
          in
            (m-2-\i) -> (m-2-\n1);
          \draw[->] let
            \n1 = {int(1+\i)}
          in
            (m-1-\n1) -> (m-3-\n1);
        };
      \end{tikzpicture}
    \end{center}
    for $\bm{p} = p_n \dots p_1$ and $\bm{q} = q_n \dots q_1$ as well as $u = a_1 \dots a_m$ and $v = b_1 \dots b_m$ if and only if $\partial \mathcal{T}$ admits the cross diagram
    \begin{center}
      \begin{tikzpicture}[baseline=(m-4-1.base)]
        \matrix[matrix of math nodes, text height=1.25ex, text depth=0.25ex] (m) {
                 & p_1 &    & \dots &    & p_n &     \\
          a_1    &     & {} & \dots & {} &     & b_1 \\
                 & {}  &    &       &    & {}  &     \\[-1.5ex]
          |[inner sep=0pt]|\vdots & |[inner sep=0pt]|\vdots &    &       &    & |[inner sep=0pt]|\vdots & |[inner sep=0pt]|\vdots \\[-2.75ex]
                 & {}  &    &       &    & {}  &     \\
          a_m    &     & {} & \dots & {} &     & b_m \\
                 & q_1 &    & \dots &    & q_n &     \\
        };
        \foreach \j in {1, 5} {
          \foreach \i in {1, 5} {
            \draw[->] let
              \n1 = {int(2+\i)},
              \n2 = {int(1+\j)}
            in
              (m-\n2-\i) -> (m-\n2-\n1);
            \draw[->] let
              \n1 = {int(1+\i)},
              \n2 = {int(2+\j)}
            in
              (m-\j-\n1) -> (m-\n2-\n1);
          };
        };
      \end{tikzpicture}
      and
      \begin{tikzpicture}[baseline=(m-2-3.base)]
        \matrix[matrix of math nodes, text height=1.25ex, text depth=0.25ex] (m) {
                     & \partial \bm{p} & \\
          \partial u &              & \partial v \\
                     & \partial \bm{q} & \\
        };
        \foreach \i in {1} {
          \draw[->] let
            \n1 = {int(2+\i)}
          in
            (m-2-\i) -> (m-2-\n1);
          \draw[->] let
            \n1 = {int(1+\i)}
          in
            (m-1-\n1) -> (m-3-\n1);
        };
      \end{tikzpicture},
      respectively;
    \end{center}
    i.\,e.\ we have to mirror the diagram along the north west to south east diagonal when passing to the dual.\newpage
    
    The dual automaton will play an important role in the following and we want to stress a connection between $\mathcal{T} = (Q, \Sigma, \delta)$ and $\partial \mathcal{T}$: we have\footnote{To avoid parenthesis, we define $\partial$ to have higher precedence than $\circ$ and $\cdot$. For example, $\partial \bm{p} \circ u$ is to be understood as $(\partial \bm{p}) \circ u$ rather than as $\partial (\bm{p} \circ u)$.} $\partial u \circ_\partial \partial \bm{p} = \partial( \bm{p} \cdot u )$ (or both undefined) for $u \in \Sigma^*$ and $\bm{p} \in Q^*$ (where the meaning of $\circ_\partial$ is the same as $\circ$ with respect to $\partial \mathcal{T}$).

    \paragraph{Union and Composition of Automata.}
    In addition to taking the dual, we need some other automaton constructions. The first construction we need is the \emph{disjoint union} $\mathcal{T} \sqcup \mathcal{T}' = (Q \sqcup Q', \Sigma \cup \Sigma', \delta \cup \delta')$ of two $S$-automata $\mathcal{T} = (Q, \Sigma, \delta)$ and $\mathcal{T}' = (Q', \Sigma', \delta')$, which is an $S$-automaton as well. Clearly, the disjoint union of two complete $S$-automata over the same alphabet is complete itself.
    
    The \emph{composition} of two $S$-automata $\mathcal{T}_1 = (Q_1, \Sigma, \delta_1)$ and $\mathcal{T}_2 = (Q_2, \Sigma, \delta_2)$ with the same alphabet $\Sigma$ is the $S$-automaton $\mathcal{T}_2 \mathcal{T}_1 = (Q_2 Q_1, \Sigma, \delta_2 \delta_1)$ where $Q_2 Q_1 = \{ q_2 q_1 \mid q_1 \in Q_1, q_2 \in Q_2 \}$ is the Cartesian product of $Q_2$ and $Q_1$ and the transitions are given by
    \[
      \delta_2 \delta_1 = \{ \trans{q_2 q_1}{a}{c}{p_2 p_1} \mid \trans{q_1}{a}{b}{p_1} \in \delta_1, \trans{q_2}{b}{c}{p_2} \in \delta_2, b \in \Sigma \} \text{.}
    \]
    The reader may verify that $\mathcal{T}_2 \mathcal{T}_1$ is indeed an $S$-automaton. If both $S$-automata are complete, then so is their composition.
    
    From the construction, it is also easy to see that the partial function induced by the state $q_2 q_1$ is indeed the composition of the partial function induced by $q_2$ with the one induced by $q_1$.
    
    The most important application of composition of automata is to take the power of some automaton. Let $\mathcal{T}^k$ denote the $k$-fold composition of $\mathcal{T} = (Q, \Sigma, \delta)$ with itself. With the above remark, we can see that the partial function induced by $\bm{p} \in Q^+$ seen as a sequence of states over $Q$ is the same as the partial function induced by $\bm{p}$ seen as a state of $\mathcal{T}^{|\bm{p}|}$. A typical application of this is the following. Suppose we have an $S$-automaton $\mathcal{T} = (Q, \Sigma, \delta)$ generating the semigroup $\mathscr{S}(\mathcal{T})$. Then, we can assume, for any sequence $\bm{q} \in Q^+$ of states, that $\bm{q}$ is already a state in $\mathcal{T}$: if this is not the case, then we can replace $\mathcal{T}$ by $\mathcal{T} \sqcup \mathcal{T}^{|\bm{q}|}$, since it generates the same semigroup.

    \paragraph{Inverse Automata and Automaton Groups.}
    An $S$-automaton $\mathcal{T} = (Q, \Sigma, \delta)$ is called \emph{invertible} if, for every pair $p \in Q$ and $b \in \Sigma$, the set $\{ \trans{p}{a}{b}{q} \in \delta \mid a \in \Sigma, q \in Q \}$ contains at most one element. For an invertible $S$-automaton $\mathcal{T} = (Q, \Sigma, \delta)$, it is possible to define its \emph{inverse} $S$-automation $\mathcal{T}^{-1} = (Q^{-1}, \Sigma, \delta^{-1})$ by
    \[
      \delta^{-1} = \{ \trans{p^{-1}}{b}{a}{q^{-1}} \mid \trans{p}{a}{b}{q} \in \delta \} \text{.}
    \]
    This way, the partial function induced by the state $q^{-1}$ of $\mathcal{T}^{-1}$ is exactly the inverse (in the sense of partial functions) of the partial function induced by the state $q$ of $\mathcal{T}$.
    
    If $\mathcal{T}$ is invertible and complete, then all sets $\{ \trans{p}{a}{b}{q} \in \delta \mid a \in \Sigma, q \in Q \}$ with $p \in Q$ and $b \in \Sigma$ contain exactly one element and all functions induced by some $p \in Q$ are bijections (and, in particular, total). Therefore, $\mathcal{T} \sqcup \mathcal{T}^{-1}$ generates a group in this case. We call this group the \emph{group generated by $\mathcal{T}$} and denote it by $\mathscr{G}(\mathcal{T})$. To emphasize this fact, we call a complete and invertible $S$-automaton a \emph{$G$-automaton}. A group is an \emph{automaton group} if it is generated by some $G$-automaton.
    
    \begin{remark}
      We will mostly work with automaton semigroups in this paper. However, the reader should not be fooled into thinking that our results are not relevant for automaton \textbf{groups}. As every automaton group is in particular an automaton semigroup, we only state our results in the more general setting.
    \end{remark}

    \paragraph{Orbital and Schreier Graphs.}
    Let $\mathcal{T} = (Q, \Sigma, \delta)$ be an $S$-automaton and let $K \subseteq Q^*$ be a set of state sequences. We define
    \[
      K \circ u = \{ \bm{q} \circ u \mid \bm{q} \in K, \bm{q} \circ u \text{ defined} \}
    \]
    as the \emph{$K$-orbit} of a word $u$ over $\Sigma$. The most important special case of this is $Q^* \circ u$, which we simply call the \emph{orbit} of $u$ (under the action defined by $\mathcal{T}$).
    
    For a language $L \subseteq \Sigma^*$, we define the relation ${\equiv_L} \subseteq Q^* \times Q^*$ by
    \[
      \bm{p} \equiv_L \bm{q} \iff \forall u \in L: \bm{p} \circ u = \bm{q} \circ u \text{ or both undefined} \text{.}
    \]
    This relation is an equivalence and we write $[ \bm{p} ]_L$ for the class of $\bm{p}$ under $\equiv_L$. We define $K/L = \{ [ \bm{q} ]_L \mid \bm{q} \in K \}$ as the set of equivalence classes of $K$. This set has a natural graph structure: we define the set of edges
    \[
      \left\{ \transa{ [ \bm{q} ]_L }{p}{ [ p\bm{q} ]_L } \mid \bm{q} \in K, p \in Q \text{ such that } p \bm{q} \in K \right\} \text{.}
    \]
    The resulting graph, thus, has out-degree bounded by $|Q|$. For simplicity, we do not distinguish between $K/L$ as a set and $K/L$ as a graph.
    
    To understand the graph $K/L$, it helps to look at some important special cases. For $L = \Sigma^*$, we have $\bm{p} \equiv_L \bm{q}$ if and only if $\bm{p} = \bm{q}$ in $\mathscr{S}(\mathcal{T})$. Therefore, the graph $Q^+ / \Sigma^*$ is (isomorphic to) the left Cayley graph of $\mathscr{S}(\mathcal{T})$. For a subset $P \subseteq Q$, $P^+ / \Sigma^*$ is the left Cayley graph of the subsemigroup generated by $P$.
    
    For the next special case, we define
    \[
      \Stab(u) = \{ \bm{q} \in Q^+ \mid u = \bm{q} \circ u \text{ and $\bm{q} \circ u$ defined} \}
    \]
    as the set of state sequences \emph{stabilizing} the word $u \in \Sigma^* \cup \Sigma^\omega$. The image of $\Stab(u)$ in $\mathscr{S}(\mathcal{T})$ is the \emph{stabilizer} of $u$. If $\mathscr{S}(\mathcal{T}) = G$ is a group, then the image of $\Stab(u)$ in $G$ is a subgroup $H$ of $G$ and we can consider the co-sets $G / H$. It is easy to see that $\bm{p} \Stab(u) =_G \bm{q} \Stab(u)$ if and only if $\bm{p} \equiv_{u} \bm{q}$ and that, thus, $Q^*/\{ u \}$ is the (left) Schreier graph of $G$ with respect to the stabilizer $H$ of $u$.
    
    Finally, let $\alpha \in \Sigma^\omega$. If $\mathcal{T}$ is complete, we have $\bm{p} \equiv_{\Pre \alpha} \bm{q}$ if and only if $\bm{p} \circ \alpha = \bm{q} \circ \alpha$. A consequence, in this case, is that the set $K / \Pre \alpha$ is in one-to-one correspondence with $K \circ \alpha$ and, similarly, that $Q^* / \Pre \alpha$ is (isomorphic to) the left orbital graph of $\alpha$ as a graph.
    
    If the automaton is non-complete, we may have $\bm{p} \not\equiv_{\Pre \alpha} \bm{q}$ although the actions of $\bm{q}$ and $\bm{p}$ are both undefined on $\alpha$. However, we still have an important connection between the two graphs: one is infinite if and only if the other one is (under certain conditions).
    \begin{lemma}\label{lem:singleWordForKOrbit}
      Let $\mathcal{T} = (Q, \Sigma, \delta)$ be an $S$-automaton and let $K \subseteq Q^*$ be suffix-closed.
      Furthermore, let $L \subseteq \Sigma^*$ be a language and define
      \[
        \vec{L} = \{ \alpha \in \Sigma^\omega \mid \text{infinitely many prefixes of $\alpha$ are in $L$} \} \text{.}
      \]
      If $K/L$ is infinite, there is some $\pi = p_1 p_2 \dots \in Q^\omega$ with $p_1, p_2, \dots \in Q$ such that all $[ p_i \dots p_1 ]_L$ with $i \geq 0$ are pairwise distinct and
      \[
        \forall \alpha \in \vec{L}: \partial \Pre \pi \circ \alpha \subseteq K \circ \alpha \text{.}
      \]
    \end{lemma}
    \begin{proof}
      Assume that the graph $K/L$ is infinite. Since $K$ is suffix-closed, every node $[ q_n \dots q_1 ]_L$ with $q_n \dots q_1 \in K$ can be reached via a path from $[ \varepsilon ]_{\Sigma^*}$ whose label is $q_1 \dots q_n \in \partial K$. Therefore, infinitely many nodes are reachable from $[ \varepsilon ]_L$ and we find an infinite simple path in $K / L$ starting in $[ \varepsilon ]_L$ (because the out-degree is bounded by $|Q|$). Let $\pi = p_1 p_2 \dots$ with $p_1, p_2, \dots \in Q$ be the label of this path. Note that, in general, we do not have that every $p_i \dots p_1$ with $0 \leq i$ is in $K$! However, we have that for every $i$, there is some $\bm{q}_i \in K$ such that $[ p_i \dots p_1 ]_L = [ \bm{q}_i ]_L$. This implies that, for all $i \geq 0$ and all $u \in L$, we have $p_i \dots p_1 \circ u = \bm{q}_i \circ u \in K \circ u$. Fix some $\alpha \in \vec{L}$ and let $u_\ell$ denote the prefix of $\alpha$ of length $\ell$. We are done if we show $p_i \dots p_1 \circ u_\ell = \bm{q}_i \circ u_\ell$ for all $\ell$. However, this is the case because, for every $\ell$, there is some $\ell' \geq \ell$ with $u_{\ell'} \in L$ since $\alpha$ is in $\vec{L}$.
    \end{proof}
    
    \begin{prop}\label{prop:KSchreierAndKOrbit}
      Let $\mathcal{T} = (Q, \Sigma, \delta)$ be an $S$-automaton and let $K \subseteq Q^*$ be suffix-closed.
      For all $\alpha \in \Sigma^\omega$, we have
      \[
        | K / \Pre \alpha | = \infty \iff | K \circ \alpha | = \infty \text{.}
      \]
    \end{prop}
    \begin{proof}
      If $K \circ \alpha$ is infinite, there are infinitely many $\bm{q}_0, \bm{q}_1, \dots \in K$ such that all $\bm{q}_i \circ \alpha$ are pairwise different. In particular, they must be defined on all prefixes of $\alpha$ and, for every $i, j \geq 0$ with $i \neq j$, there must be some finite prefix $u$ of $\alpha$ such that $\bm{q}_i \circ u$ is different to $\bm{p}_j \circ u$. This shows that $K / \Pre \alpha$ is infinite.
      
      Now assume that $K / \Pre \alpha$ is infinite. Let $\pi = p_1 p_2 \dots \in Q^\omega$ with $p_1, p_2, \dots \in Q$ denote the one from \autoref{lem:singleWordForKOrbit} (for $L = \Pre \alpha$). Thus, we have $\partial \Pre \pi \circ \alpha \subseteq K \circ \alpha$ and we only need to show that the former is infinite. If there is some $i_0 \geq 0$ such that $p_{i_0} \dots p_1 \circ \alpha$ is not defined, then there is some shortest finite prefix $u$ of $\alpha$ for which $p_{i_0} \dots p_1 \circ u$ is not defined. Notice that no $p_j \dots p_1 \circ u$ with $j \geq i_0$ can be defined. This implies that the infinitely many $p_j \dots p_1$ with $j \geq i_0$ must all act pairwise differently on those prefixes of $\alpha$ that are shorter than $u$. This, however, is not possible since there are only finitely many possible (partial) actions. Therefore, all $p_i \dots p_1 \circ \alpha$ with $i \geq 0$ must be defined but they are all pairwise different (since there must be a difference on some prefix of $\alpha$ as we have $[p_i \dots p_1]_{\Pre \alpha} \neq [p_j \dots p_1]_{\Pre \alpha}$ for all $i \neq j$).
    \end{proof}
    
    An important special case of \autoref{prop:KSchreierAndKOrbit}, which we will be using a few times below, is if $K$ belongs to a single $\omega$-word.
    \begin{cor}
      \label{lem:SchreierInfiniteIffOrbitInfinite}
      Let $\mathcal{T} = (Q, \Sigma, \delta)$ be an $S$-automaton and let $\pi \in Q^\omega$ and $\alpha \in \Sigma^\omega$. Then, we have:
      \[
        | \partial \Pre \pi / \Pre \alpha | = \infty \iff | \partial \Pre \pi \circ \alpha | = \infty
      \]
    \end{cor}
    
    \paragraph{Dual Orbits.}
    Let $\mathcal{T} = (Q, \Sigma, \delta)$ be an $S$-automaton. We can also see $L \subseteq \Sigma^*$ as a set of state sequences and $K \subseteq Q^*$ as a language over the alphabet of the dual automaton $\partial \mathcal{T}$. Under this view, we obtain a well-defined meaning of $L/K$ from the definition above. So, $L/K$ is to be understood with respect to the dual $\partial \mathcal{T}$ of $\mathcal{T}$ while $K/L$ is to be understood with respect to $\mathcal{T}$.
    
    Similarly, we can also consider $L \circ_\partial \bm{p}$ the $L$-orbit of some word $\bm{p}$ over $Q$ under the action of $\partial \mathcal{T}$. If $\bm{p}$ is a finite word, then we have a one-to-one correspondence between $L \circ_\partial \bm{p}$ and
    \[
      \partial \bm{p} \cdot \partial L = \{ \partial \bm{p} \cdot \partial u \mid u \in L, \partial \bm{p} \cdot \partial u \text{ defined} \}
    \]
    where the bijection is given by taking the reverse. With this in mind, we can naturally extend the notation $\bm{p} \cdot u$ and $\bm{p} \cdot L$ to left-infinite sequences over $Q$: $\partial \pi \cdot u$ is the reverse of $\partial u \circ_\partial \pi$ for $\pi \in Q^\omega$ and $\partial \pi \cdot L$ is the set of all $\partial \pi \cdot u$ for $u \in L$.
  \end{section}

  \begin{section}{Infinite Automaton Semigroups Have Infinite Orbits}\label{sec:InfiniteOrbits}
    In this section, we are going to show that an automaton semigroup $\mathscr{S}(\mathcal{T})$ generated by an $S$-automaton $\mathcal{T} = (Q, \Sigma, \delta)$ is infinite if and only if there is some $\omega$-word $\alpha \in \Sigma^\omega$ with an infinite orbit $Q^* \circ \alpha$. In fact, we are going to show a more general result, which we apply also for finitely generated infinite subsemigroups of $\mathscr{S}(\mathcal{T})$ and for principal left ideals.
    
    For our discussion, fix some arbitrary $S$-automaton $\mathcal{T} = (Q, \Sigma, \delta)$ for this section. It is well-known that $\mathcal{T}$ generates an infinite semigroup if and only if $\partial \mathcal{T}$ generates an infinite semigroup as well (see e.\,g.\ \cite[Proposition~10]{aklmp12} or \cite[Corollary~1]{DaRo14}). We will see that this connection between the semigroup and its dual is only a special case of a more fundamental one:
    \begin{prop}
      \label{prop:KLInfiniteIffLKInfinite}
      Let $K \subseteq Q^*$ be suffix-closed and let $L \subseteq \Sigma^*$ be prefix-closed. Then, we have:
      \[
        | K / L | = \infty \iff | \partial L / \partial K | = \infty
      \]
    \end{prop}
    \begin{proof}
      Due to duality, we only have to show one direction, which we do by contraposition. Assume that $K / L = \{ [\bm{p}]_L \mid \bm{p} \in K \}$ is of finite size $C$. Clearly, the size of $K \circ u$ for some $u \in L$ is bounded by the number $C$ of different classes. For $u \in L$ and $\bm{p} \in K$, we have the following equivalence of cross diagrams:
      \begin{center}
        \begin{tikzpicture}[baseline=(m-2-3.base)]
          \matrix[matrix of math nodes, text height=1.25ex, text depth=0.25ex] (m) {
                   & u & \\
            \bm{p} &   & \bm{q} \\
                   & v & \\
          };
          \draw[->] (m-2-1) -> (m-2-3);
          \draw[->] (m-1-2) -> (m-3-2);
          \node[below=0cm of m-3-2] {in $\mathcal{T}$};
          \node[base right=-0.3333em of m-1-2, inner sep=0pt] {${}\in L$};
          \node[base left=-0.3333em of m-2-1, inner sep=0pt] {$K \ni{}$};
          \node[base right=-0.3333em of m-3-2, inner sep=0pt] {${}\in K \circ u$};
        \end{tikzpicture}
        ${}\iff{}$
        \begin{tikzpicture}[baseline=(m-2-3.base)]
          \matrix[matrix of math nodes, text height=1.25ex, text depth=0.25ex] (m) {
                       & \partial \bm{p} & \\
            \partial u &                 & \partial v \\
                       & \partial \bm{q} & \\
          };
          \draw[->] (m-2-1) -> (m-2-3);
          \draw[->] (m-1-2) -> (m-3-2);
          \node[below=0cm of m-3-2] {in $\partial \mathcal{T}$};
          \node[base right=-0.3333em of m-1-2, inner sep=0pt] {${}\in \partial K$};
          \node[base right=-0.3333em of m-2-3, inner sep=0pt] {${}\in \partial (K \circ u)$};
        \end{tikzpicture}
      \end{center}
      It follows that the action of $\partial u \in \partial L$ on a (dual) word $\partial \bm{p} \in \partial K$ is described by an $S$-automaton\footnote{In fact, the $S$-automaton is basically the part of $(\partial \mathcal{T})^{|\partial u|}$ reachable from $\partial u$ by input words from $\partial K$. From the cross diagram on the right, we know that all such reachable states are from $\partial (K \circ u)$.} with state set $\partial (K \circ u)$. There are only finitely many (non-isomorphic) $S$-automata of size $|\partial (K \circ u)| = |K \circ u| \leq C$. If we have $[ \partial u ]_{\partial K} \neq [ \partial u' ]_{\partial K}$, then the automaton belonging to $\partial u$ must be different to the one belonging to $\partial u'$. Therefore, we obtain that there are only finitely many different classes or, in other words, that $\partial L / \partial K$ is finite.
    \end{proof}
        
    Our main result follows from \autoref{prop:KLInfiniteIffLKInfinite}. However, \autoref{prop:KLInfiniteIffLKInfinite} is also interesting on its own! To demonstate this, we will present some further applications at the end of this section. First, however, we show our main result and its most interesting corollaries.
    
    \begin{thm}
      \label{thm:infiniteSubsetHasInfiniteOrbit}
      Let $K \subseteq Q^*$ be suffix-closed. The image of $K$ in $\mathscr{S}(\mathcal{T})$ is infinite if and only if there is some $\omega$-word $\alpha$ whose $K$-orbit $K \circ \alpha$ is infinite.
    \end{thm}
    \begin{proof}
      If the image of $K$ in $\mathscr{S}(\mathcal{T})$ is finite, then $K \circ \alpha$ is clearly bounded by that finite size for all $\alpha \in \Sigma^\omega$.


      Therefore, assume that $K$ is infinite in $\mathscr{S}(\mathcal{T})$. Since $\equiv_{\Sigma^*}$ is the equality in $\mathscr{S}(\mathcal{T})$, we have that the graph $K / \Sigma^*$ is infinite. From \autoref{lem:singleWordForKOrbit}, we obtain that there is some $\pi = p_1 p_2 \dots$ with $p_1, p_2, \dots \in Q$ with $\partial \Pre \pi \circ \alpha \subseteq K \circ \alpha$ for all $\alpha \in \Sigma^\omega$. Thus, it suffices to show that there is some $\alpha \in \Sigma^\omega$ with $| \partial \Pre \pi \circ \alpha | = \infty$.
      
      We have that $\partial \Pre \pi / \Sigma^*$ is infinite since it contains the infinite simple path starting in $[ \varepsilon ]_{\Sigma^*}$ whose label is $\pi$. By \autoref{prop:KLInfiniteIffLKInfinite}, we obtain that $\Sigma^* / \Pre \pi$ is infinite as well and, thus, contains an infinite simple path starting in $[ \varepsilon ]_{\Pre \pi}$. Let $\alpha \in \Sigma^\omega$ be the label of this path. Obviously, this infinite simple path exists also in the subgraph $\partial \Pre \alpha / \Pre \pi$, which shows that this graph is also infinite. By applying \autoref{prop:KLInfiniteIffLKInfinite} again, we obtain that $\partial \Pre \pi / \Pre \alpha$ is infinite. It then follows from \autoref{lem:SchreierInfiniteIffOrbitInfinite} that $\partial \Pre \pi \circ \alpha$ is infinite.
    \end{proof}
    
    \paragraph{Corollaries.}
    The formulation of \autoref{thm:infiniteSubsetHasInfiniteOrbit} is quite general and allows us to derive a few natural corollaries. The first one is the special case of \autoref{thm:infiniteSubsetHasInfiniteOrbit} where we set $L = Q^*$.
    
    \begin{cor}\label{cor:infiniteSemigroupsHaveInfiniteOrbits}
      The semigroup $\mathscr{S}(\mathcal{T})$ generated by some $S$-automaton $\mathcal{T} = (Q, \Sigma, \delta)$ is infinite if and only if there is some $\omega$-word $\alpha \in \Sigma^\omega$ with an infinite orbit $Q^* \circ \alpha$.
    \end{cor}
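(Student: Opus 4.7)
The plan is to obtain this corollary as an immediate specialization of \autoref{thm:infiniteSubsetHasInfiniteOrbit}. Concretely, I would apply that theorem with the choice $L = Q^{*}$, which requires verifying two trivial points: that $Q^{*}$ is finitely suffix-closed, and that the set $L \circ{}\!$ agrees with $\mathscr{S}(\mathcal{T})$ up to finiteness.

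For the first point, take $F = Q$, which is a finite, non-empty subset of $Q^{*}$. Then $L = Q^{*} = F^{*}$ trivially, and $F^{*}$ is suffix-closed as a language over the alphabet $F$. Hence $L$ is suffix-closed over $F$, so $L$ is finitely suffix-closed in the sense of the definition preceding \autoref{prop:LInfiniteImpliesLOrbitsUnbounded}.

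For the second point, recall that $\mathscr{S}(\mathcal{T}) = \{\bm{q} \circ{}\! \mid \bm{q} \in Q^{+}\}$ while $L \circ{}\! = \{\bm{q} \circ{}\! \mid \bm{q} \in Q^{*}\} = \mathscr{S}(\mathcal{T}) \cup \{\mathrm{id}\}$, the two sets differing by at most the identity map $\varepsilon \circ{}\!$. Analogously, for any $\xi \in \Sigma^{\omega}$, the orbit $L \circ \xi = Q^{*} \circ \xi$ differs from $Q^{+} \circ \xi$ by at most the element $\xi$ itself. In both cases the finiteness is unaffected by this single element, so $\mathscr{S}(\mathcal{T})$ is infinite iff $L \circ{}\!$ is infinite, and $Q^{*} \circ \xi$ is infinite iff $L \circ \xi$ is.

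Putting these observations together, \autoref{thm:infiniteSubsetHasInfiniteOrbit} applied to $L = Q^{*}$ yields exactly the stated equivalence. There is no genuine obstacle here; the only thing to be careful about is the bookkeeping distinction between $Q^{*}$ and $Q^{+}$, which is resolved by the ``differ by at most one element'' argument above.
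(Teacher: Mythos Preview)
Your proposal is correct and follows essentially the same approach as the paper: apply \autoref{thm:infiniteSubsetHasInfiniteOrbit} with $L = Q^{*}$, noting that $Q^{*}$ is suffix-closed over $F = Q$. The paper's proof is a single sentence to this effect; your additional bookkeeping about $Q^{*}$ versus $Q^{+}$ is a harmless (and arguably helpful) clarification that the paper leaves implicit.
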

    
    Using Gillbert's result on the undecidability of the finiteness problem for automaton semigroup \cite{Gilbert13}, we immediately also obtain the following consequence.
    \begin{cor}
      The problem
      \problem{
        an $S$-automaton $\mathcal{T} = (Q, \Sigma, \delta)$
      }{
        is there an $\omega$-word $\alpha \in \Sigma^\omega$ with $|Q^* \circ \alpha| = \infty$?
      }
      \noindent{}is undecidable.
    \end{cor}
    
    \enlargethispage{\baselineskip}
    Since every automaton group is, in particular, an automaton semigroup, we immediately have \autoref{cor:infiniteSemigroupsHaveInfiniteOrbits} also for automaton groups, which negatively answers an open question communicated to us by Ievgen V. Bondarenko (see also \cite[Open Problem~4.3]{decidabilityPart}): is there an infinite automaton group having only finite Schreier graphs in the boundary?
    \begin{cor}\label{cor:infiniteGroupsHaveInfiniteSemiOrbits}
      Let $\mathcal{T} = (Q, \Sigma, \delta)$ be a $G$-automaton.
      
      Then, $\mathscr{G}(\mathcal{T})$ is infinite if and only if there is an $\omega$-word $\alpha \in \Sigma^\omega$ such that $Q^* \circ \alpha \subseteq {Q^{\pm 1}}^* \circ \alpha$ is infinite
    \end{cor}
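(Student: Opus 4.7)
The plan is to reduce the statement to the previous corollary (\autoref{cor:infiniteSemigroupsHaveInfiniteOrbits}) applied to $\mathcal{T}$ itself, which is legitimate because a $G$-automaton is in particular an $S$-automaton. The inclusion $Q^* \circ \xi \subseteq \wt{Q}^* \circ \xi$ is immediate from $Q \subseteq \wt{Q}$, so this is just a remark and does not require proof.

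The easy direction is: if some $\xi \in \Sigma^\omega$ has $Q^* \circ \xi$ infinite, then infinitely many state sequences in $Q^+$ induce pairwise different partial maps, already in the subsemigroup $\mathscr{S}(\mathcal{T})$; since $\mathscr{S}(\mathcal{T})$ is a subsemigroup of $\mathscr{G}(\mathcal{T})$, the group is infinite as well.

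For the other direction, the key reduction is: if $\mathscr{G}(\mathcal{T})$ is infinite, then already $\mathscr{S}(\mathcal{T})$ is infinite. Once this is established, \autoref{cor:infiniteSemigroupsHaveInfiniteOrbits} applied to the $S$-automaton $\mathcal{T}$ yields an $\omega$-word $\xi$ with $Q^* \circ \xi$ infinite, and we are done. To prove the reduction, I would argue by contrapositive: assume $\mathscr{S}(\mathcal{T})$ is finite and show $\mathscr{G}(\mathcal{T})$ is finite. Since $\mathscr{S}(\mathcal{T})$ is a finite subsemigroup of the group $\mathscr{G}(\mathcal{T})$, for every $\bm{q} \in Q^+$ there are $i < j$ with $\bm{q}^i = \bm{q}^j$ in $\mathscr{S}(\mathcal{T}) \subseteq \mathscr{G}(\mathcal{T})$; multiplying by inverses in the group gives $\bm{q}^{j-i} = 1$, so $\bm{q}^{-1} = \bm{q}^{j-i-1} \in \mathscr{S}(\mathcal{T}) \cup \{1\}$. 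Applied to single generators $q \in Q$, this shows that every $\inverse{q} \in \inverse{Q}$ (which acts in $\mathscr{G}(\mathcal{T})$ as the inverse of $q$) lies in $\mathscr{S}(\mathcal{T}) \cup \{1\}$. Hence $\mathscr{G}(\mathcal{T}) = \mathscr{S}(\mathcal{T}) \cup \{1\}$, which is finite.

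I do not expect any serious obstacle: the previous corollary does all the heavy lifting, and the extra step is the standard algebraic fact that a non-empty finite subsemigroup of a group is a subgroup. The only mild point to be careful about is that $\mathscr{S}(\mathcal{T})$ need not contain the identity a priori, which is why the formula $\bm{q}^{-1} = \bm{q}^{j-i-1}$ must be interpreted as possibly yielding $1$ (when $j = i+1$), hence the $\cup \{1\}$ in the conclusion.
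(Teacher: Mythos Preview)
Your proposal is correct and follows essentially the same route as the paper: reduce to \autoref{cor:infiniteSemigroupsHaveInfiniteOrbits} via the fact that $\mathscr{G}(\mathcal{T})$ infinite implies $\mathscr{S}(\mathcal{T})$ infinite. The paper cites this reduction from the literature, whereas you supply the standard self-contained argument that a finite subsemigroup of a group is a subgroup (up to adjoining the identity); both versions are fine.
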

    
    This result is also interesting algorithmically as it allows for a re-formulation of the finiteness problem for automaton groups.
    \begin{cor}
      The finiteness problem for automaton groups
      \problem{
        a $G$-automaton $\mathcal{T}$
      }{
        is $\mathscr{G}(\mathcal{T})$ infinite?
      }
      \noindent{}is equivalent to the problem
      \problem{
        a $G$-automaton $\mathcal{T} = (Q, \Sigma, \delta)$
      }{
        is there an $\omega$-word $\alpha \in \Sigma^\omega$ with $|Q^* \circ \alpha| = \infty$?
      }
    \end{cor}

    Next, we can extend our result to finitely generated subsemigroups: a finitely generated subsemigroup of an automaton semigroup is infinite if and only if it admits an $\omega$-word which has an infinite orbit under its action. This is even true if the subsemigroup itself is not an automaton semigroup!
    \begin{cor}\label{cor:infiniteSubsemigroupsHaveInfiniteOrbist}
      Let $\mathcal{T} = (Q, \Sigma, \delta)$ be an $S$-automaton. A subsemigroup of $\mathscr{S}(\mathcal{T})$ generated by a finite set $\bm{P} \subseteq Q^+$ is infinite if and only if there is an $\omega$-word $\alpha \in \Sigma^\omega$ whose orbit $\bm{P}^* \circ \alpha$ under the action of the subsemigroup is infinite.
    \end{cor}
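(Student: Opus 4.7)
The plan is to invoke \autoref{thm:infiniteSubsetHasInfiniteOrbit} with a well-chosen language $L$. The natural candidate is $L = \bm{P}^* \subseteq Q^*$ together with the set $F = \bm{P}$. Since $\bm{P}$ is finite and non-empty (it is a generating set), and $\bm{P} \subseteq Q^+ \subseteq Q^*$, $F$ qualifies as a finite, non-empty subset of $Q^*$. Viewed as a language over the alphabet $F$, $L$ is just the free monoid $F^*$, which is trivially suffix-closed. Hence $L$ is finitely suffix-closed in the sense of the definition preceding \autoref{prop:LInfiniteImpliesLOrbitsUnbounded}.

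Next I would observe that the image $L \circ{}\! = \bm{P}^* \circ{}\!$ in $\mathscr{S}(\mathcal{T})$ agrees, up to the inclusion of the identity map $\varepsilon \circ{}\!$, with the subsemigroup $\langle \bm{P} \rangle$ of $\mathscr{S}(\mathcal{T})$ generated by (the images of) the elements of $\bm{P}$. Since adding at most one element does not change the status of being finite or infinite, $L \circ{}\!$ is infinite if and only if $\langle \bm{P} \rangle$ is infinite. The analogous statement holds for the orbit of any word: $L \circ \xi = \bm{P}^* \circ \xi$ coincides with the $\langle \bm{P} \rangle$-orbit of $\xi$ (possibly augmented by $\xi$ itself via $\varepsilon \circ \xi = \xi$), so one is infinite if and only if the other is.

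Applying \autoref{thm:infiniteSubsetHasInfiniteOrbit} to this $L$ then yields that $\langle \bm{P} \rangle$ is infinite if and only if there exists an $\omega$-word $\xi \in \Sigma^\omega$ with $L \circ \xi = \bm{P}^* \circ \xi$ infinite, which is exactly the statement of the corollary.

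There is essentially no hard step here; the entire content lies in correctly matching the data of a finitely generated subsemigroup to the hypothesis of \autoref{thm:infiniteSubsetHasInfiniteOrbit}. The only point that requires a brief sanity check is that the subsemigroup $\langle \bm{P} \rangle$ need not itself be an automaton semigroup (the corollary asserts this explicitly), but this is irrelevant for the argument: \autoref{thm:infiniteSubsetHasInfiniteOrbit} makes a statement about an arbitrary finitely suffix-closed subset of state sequences of $\mathcal{T}$, regardless of whether the induced action comes from an automaton structure on $\langle \bm{P} \rangle$ itself.
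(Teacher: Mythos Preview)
Your proposal is correct and follows essentially the same approach as the paper's proof: both take $L = \bm{P}^*$, observe it is suffix-closed over the finite set $F = \bm{P}$, and apply \autoref{thm:infiniteSubsetHasInfiniteOrbit}. The only minor point is that your parenthetical ``it is a generating set'' does not actually guarantee $\bm{P} \neq \varnothing$; the paper handles this edge case separately (the generated subsemigroup is then trivially finite), and you should too.
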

    \begin{proof}
      By replacing $\mathcal{T}$ by the union of $\mathcal{T}$ with appropriate powers of itself, we may assume that all $\bm{p} \in \bm{P}$ are actually states in $\mathcal{T}$. Then, $\bm{P}^*$ is a suffix-closed language over the states of $\mathcal{T}$ and the result follows from \autoref{thm:infiniteSubsetHasInfiniteOrbit}.
    \end{proof}
    
    An interesting direct application of \autoref{cor:infiniteSubsemigroupsHaveInfiniteOrbist} is that an element of an automaton semigroup is torsion-free if and only if there is a single $\omega$-word on which all powers of the element act differently.
    \begin{cor}\label{cor:torsionFreenessIsWitnessedByASingleWord}
      Let $\mathcal{T} = (Q, \Sigma, \delta)$ be an $S$-automaton and let $\bm{q} \in Q^+$ be some state sequence. Then, $\bm{q}$ is torsion-free in $\mathscr{S}(\mathcal{T})$ if and only if there is some $\omega$-word $\alpha \in \Sigma^\omega$ such that $\bm{q}^i \circ \alpha \neq \bm{q}^j \circ \alpha$ for all $i \neq j$.
      
      If $\mathcal{T}$ is a $G$-automaton, then $\bm{q}$ has infinite order in $\mathscr{G}(\mathcal{T})$ if and only if there is some $\omega$-word $\alpha \in \Sigma^\omega$ such that $\bm{q}^i \circ \alpha \neq \alpha$ for all $i > 0$.
    \end{cor}
    
    We also get an analogous result for principal left ideals of automaton semigroups. A subset $I$ of some semigroup $S$ is a \emph{left ideal} if $SI \subseteq I$ holds. A \emph{principal} left ideal is a left ideal of the form $S^1 s$ for an $s \in S$ where $S^1$ is the monoid generated by $S$. We get that a principal left ideal of an automaton semigroup is infinite if and only if it admits an $\omega$-word with an infinite orbit. In this context, (principal) left ideals are interesting since they are not finitely generated in general (although they are always subsemigroups). Therefore, this result does not immediately follow from \autoref{cor:infiniteSubsemigroupsHaveInfiniteOrbist}.
    \begin{cor}\label{cor:infiniteLeftPrincipalIdealsHaveInfiniteOrbits}
      Let $\mathcal{T} = (Q, \Sigma, \delta)$ be an $S$-automaton and let $\bm{p} \in Q^+$. Then, the principal left ideal $Q^* \bm{p} = \{ \bm{q} \bm{p} \mid \bm{q} \in Q^* \}$ in $\mathscr{S}(\mathcal{T})$ is infinite if and only if there is an $\omega$-word $\alpha \in \Sigma^\omega$ whose orbit $Q^* \bm{p} \circ \alpha$ under the action of the left ideal is infinite.
    \end{cor}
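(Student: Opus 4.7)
The plan is to reduce the statement to a direct application of \autoref{thm:infiniteSubsetHasInfiniteOrbit}. The backward direction is immediate: if $Q^* \bm{p} \circ \xi$ is infinite, then the elements $\bm{q}\bm{p} \circ{}\!$ witnessing the distinct orbit points must themselves be pairwise distinct elements of the principal left ideal. The only content is the forward direction, and there the goal is to exhibit a finitely suffix-closed language $L \subseteq Q^*$ whose image $L \circ{}\!$ coincides with $Q^* \bm{p} \circ{}\!$ up to a bounded (in fact, finite) difference.

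First I would choose $F = Q \cup \{ \bm{p} \}$, where $\bm{p}$ is treated as an atomic letter of $F$ in addition to the single-state letters from $Q$. Then I would set $L = Q^* \bm{p} \cup \{\varepsilon\}$ and equip each non-empty word of $L$ with the canonical $F$-decomposition in which the individual states of $Q$ form the prefix and $\bm{p}$ is the final letter of the $F$-word. Viewed over $F$, the suffixes of such an $F$-word $q_1 \cdots q_n \bm{p}$ are exactly the tails $q_i \cdots q_n \bm{p}$ for $1 \leq i \leq n$, the single-letter word $\bm{p}$, and the empty word $\varepsilon$, all of which lie in $L$. Hence $L$ is suffix-closed over the finite, non-empty set $F$, i.\,e.\ finitely suffix-closed.

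Now I would apply \autoref{thm:infiniteSubsetHasInfiniteOrbit} to $L$: the subset $L \circ{}\!$ of $\mathscr{S}(\mathcal{T})$ is infinite if and only if some $\xi \in \Sigma^\omega$ satisfies $|L \circ \xi| = \infty$. To translate this back to the statement, I only need to observe that both $L \circ{}\!$ and $L \circ \xi$ differ from $Q^* \bm{p} \circ{}\!$ and $Q^* \bm{p} \circ \xi$, respectively, by at most one element (namely the identity map coming from $\varepsilon \circ{}\!$, and the word $\xi$ itself). Consequently $L \circ{}\!$ is infinite if and only if the principal left ideal $Q^* \bm{p} \circ{}\!$ is infinite, and $L \circ \xi$ is infinite if and only if $Q^* \bm{p} \circ \xi$ is infinite, which yields the claim.

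There is no real obstacle here beyond correctly setting up $F$ so that $\bm{p}$ is handled as a single letter rather than being split into its constituent states of $Q$; without this, the required suffix $\bm{p}$ could fail to lie in $L$ when $|\bm{p}| > 1$, and suffix-closure over $Q$ alone would break. Once the correct $F$ is fixed, everything follows from the general \autoref{thm:infiniteSubsetHasInfiniteOrbit}, exactly as in the proofs of \autoref{cor:infiniteSemigroupsHaveInfiniteOrbits} and \autoref{cor:infiniteSubsemigroupsHaveInfiniteOrbist}.
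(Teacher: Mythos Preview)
Your proposal is correct and follows essentially the same approach as the paper: you take $L = Q^* \bm{p} \cup \{\varepsilon\}$, observe it is suffix-closed over $F = Q \cup \{\bm{p}\}$ with $\bm{p}$ treated as a single letter, apply \autoref{thm:infiniteSubsetHasInfiniteOrbit}, and then note that both $L \circ{}\!$ and $L \circ \xi$ differ from the principal left ideal and its orbit by at most the single element contributed by $\varepsilon$. This is exactly the paper's argument.
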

    \begin{proof}
      Again, we can consider $\bm{p}$ as a state in $\mathcal{T}^{|\bm{p}|}$ and replace $\mathcal{T}$ by $\mathcal{T} \cup \mathcal{T}^{|\bm{p}|}$ without changing the semigroup. This turns $Q^* \bm{p} \cup \{ \varepsilon \}$ into a suffix-closed language and the result follows from \autoref{thm:infiniteSubsetHasInfiniteOrbit}.
    \end{proof}
    
    However, not every interesting subsemigroup of an automaton semigroup belongs to a language of generators which is suffix-closed. One such example are principal right ideals: a subset $I$ of a semigroup $S$ is a \emph{right ideal} if $IS \subseteq I$ holds; a \emph{principal} right ideal is a right ideal of the form $s S^1$ for some $s \in S$. If $I$ is both a left and a right ideal, it is called a \emph{two-sided ideal}. \emph{Principal} two-sided ideals are two-sided ideals of the form $S^1 s S^1$ for some $s \in S$. The analogous result to \autoref{cor:infiniteLeftPrincipalIdealsHaveInfiniteOrbits} for principal right and two-sided ideals does not hold, as the next counter-example shows. Since all three types of ideals are also subsemigroups, this shows as well that \autoref{cor:infiniteSubsemigroupsHaveInfiniteOrbist} cannot be generalized to arbitrary (non-finitely generated) subsemigroups.
    \begin{counterexample}
      A counter-example is given by the $S$-automaton
      \begin{center}
        \begin{tikzpicture}[auto, shorten >=1pt, >=latex, baseline=(p.base)]
          \node[state] (q) {$q$};
          \node[state, right=of q] (id) {$\id$};\texttt{
          \node[state, right=2cm of id] (p) {$p$};}
          \path[->] (q) edge[loop left] node {$a/a$} (q)
                        edge node {$b/a$} (id)
                    (id) edge[loop right] node[align=center] {$a/a$\\$b/b$} (id)
                    (p) edge[loop right] node {$a/a$} (p)
          ;
        \end{tikzpicture}
      \end{center}
      whose state set we denote by $Q$. For $L = pQ^*$, we then have that, for every $\alpha \in \Sigma^\omega$, either $L \circ \alpha = \emptyset$ or $L \circ \alpha = \{ a^\omega \}$ contains only a single word. However, the principal right ideal (and non-finitely generated subsemigroup) given by $L= pQ^*$ in $\mathscr{S}(\mathcal{T})$ is infinite: if we take $pq^i \in L$ and $pq^j \in L$ for $i > j$, then $pq^i \circ b^i = a^i$ while $pq^j \circ{} b^i$ is undefined (see the schematic depiction of the orbital graph of $b^i$ in \autoref{fig:orbitalGraphOfBI}); thus, we have found infinitely many elements in $L$ that are pairwise distinct in $\mathscr{S}(\mathcal{T})$. Finally, we have the same situation for the principal two-sided ideal $Q^* p Q^*$ in $\mathscr{S}(\mathcal{T})$ since we have $qp = \id p = p$ in $\mathscr{S}(\mathcal{T})$.
    \end{counterexample}
    \begin{figure}[t]%
      \centering%
        \begin{tikzpicture}[auto, shorten >=1pt, >=latex]
          \node (bi) {$b^i$};
          \node[right=of bi] (abi-1) {$a b^{i - 1}$};
          \node[right=of abi-1] (dots) {$\dots$};
          \node[right=of dots] (ai-1b) {$a^{i - 1} b$};
          \node[right=of ai-1b] (ai) {$a^i$};
          
          \path[->] (bi) edge node {$q$} (abi-1)
                    (abi-1) edge node {$q$} (dots)
                    (dots) edge node {$q$} (ai-1b)
                    (ai-1b) edge node {$q$} (ai)
                    (ai) edge[loop right] node {$p, q$} (ai)
          ;
        \end{tikzpicture}%
      \caption{Orbital graph of $b^i$ with $\id$-self-loops omitted for clarity.}\label{fig:orbitalGraphOfBI}
    \end{figure}
    
    \paragraph{Corollaries of \autoref{prop:KLInfiniteIffLKInfinite}.}
    We finally return to other applications of \autoref{prop:KLInfiniteIffLKInfinite}. A particularly nice special case is the restriction to two single $\omega$-words.
    \begin{cor}
      Let $\pi \in Q^\omega$ and $\alpha \in \Sigma^\omega$. Then, we have
      \[
        |\partial \Pre \pi \circ \alpha| = \infty \iff |\partial \Pre \alpha \circ_\partial \pi| = |\partial \pi \cdot \Pre \alpha| = \infty
      \]
    \end{cor}
    \begin{proof}
      We only have to show one direction as the other one follows from duality. Assume that $\partial \Pre \pi \circ \alpha$ is infinite. By \autoref{lem:SchreierInfiniteIffOrbitInfinite}, $\partial \Pre \pi / \Pre \alpha$ is infinite and, by \autoref{prop:KLInfiniteIffLKInfinite}, we obtain that $\partial \Pre \alpha / \Pre \pi$ is infinite as well. This means that there are infinitely many prefixes of $\alpha$ that (dually) act pairwise differently on $\pi$. Therefore, $\partial \Pre \alpha \circ_\partial \pi$ is infinite.
    \end{proof}
    
    Additionally, we can recover a known connection between the order of a group element and its orbit under the action of the dual (see \cite[Theorem~3]{DaRo16}). In fact, we can easily generalize this connection to semigroups and ultimately periodic words.\footnote{The connection seems to be quite versatile. For example, it can also be generalized in different ways (see, e.\,g.\ \cite[Lemma~3.6]{francoeur2018existence}).}
    \begin{thm}\label{thm:torsionIsFiniteDualOrbit}
      Let $\mathcal{T} = (Q, \Sigma, \delta)$ be an $S$-automaton and let $\bm{q} = q_1 q_2 \dots q_n$ be a non-empty sequence of states $q_1, q_2, \dots, q_n \in Q$. Then, the statements
      \begin{enumerate}
        \item $\partial \bm{q}$ has torsion in $\mathscr{S}(\mathcal{T})$.
        \item The orbit $\Sigma^* \circ_{\partial} \bm{q}^\omega$ of $\bm{q}^\omega$ under the action of the dual of $\mathcal{T}$ is finite.
        \item The orbit $\Sigma^* \circ_{\partial} \bm{p} \bm{q}^\omega$ of $\bm{p} \bm{q}^\omega$ under the action of the dual of $\mathcal{T}$ is finite for all $\bm{p} \in Q^*$.
      \end{enumerate}
      are equivalent.
    \end{thm}
    \begin{proof}
      The implication from 3.~to 2.~is trivial and, for the other direction, we observe that we have $\Sigma^* \circ_\partial \bm{p} \bm{q}^\omega \subseteq Q^{|\bm{p}|} (\Sigma^* \circ_\partial \bm{q}^\omega)$ for all $\bm{p} \in Q^*$.
      
      The equivalence of 1.~and 2.~can be shown directly from the results above. We set $q' = \partial \bm{q} = q_n \dots q_1 \in Q^n$ and consider it as a single state in $\mathcal{T}^n$. If we define ${}^{\omega} q'$ as the left infinite sequence $\dots q' q'$ over $Q^n$, we have
      \begin{align*}
        | \Sigma^* \circ_{\partial \mathcal{T}} \bm{q}^\omega | < \infty &\iff | \partial (\bm{q}^\omega) \cdot_{\mathcal{T}} \Sigma^* | < \infty &&\text{(definition of dual orbits)}\\
        &\iff | {}^{\omega} q' \cdot_{\mathcal{T}^n} \Sigma^* | < \infty &&\text{(definition of power automata)}\\
        &\iff | \Sigma^* \circ_{\partial(\mathcal{T}^n)} (q')^\omega | < \infty &&\text{(definition of dual orbits)}\\
        &\iff | \Sigma^* / \Pre (q')^\omega | < \infty \text{ w.\,r.\,t.~} \partial(\mathcal{T}^n) &&\text{(\autoref{prop:KSchreierAndKOrbit})}\\
        &\iff | \partial \Pre (q')^\omega / \Sigma^* | < \infty \text{ w.\,r.\,t.~} \mathcal{T}^n &&\text{(\autoref{prop:KLInfiniteIffLKInfinite})}\\
        &\iff q' \text{ has torsion in } \mathscr{S}(\mathcal{T}^n) &&\text{($\equiv_{\Sigma^*}$ is equality in $\mathscr{S}(\mathcal{T}^n)$)}\\
        &\iff \partial \bm{q} \text{ has torsion in } \mathscr{S}(\mathcal{T}) &&\text{(definition of power automata)}
      \end{align*}
      where we use index notation to indicate the automaton to which $\circ$ and $\cdot$ refer.
    \end{proof}
  \end{section}
  
  \begin{section}{Self-Similar Semigroups}\label{sec:SelfSimilar}
    So far, we considered all automata to be finite, in the sense that both the state set and the alphabet were finite. For this section, we are going to relax this constraint on the state set. We will call semigroups generated by such infinite state $S$-automata \emph{self-similar}.

    It turns out that, in the setting of self-similar semigroups and groups, it is rather easy to construct a counter example for the analogue of \autoref{cor:infiniteSemigroupsHaveInfiniteOrbits}: the automaton given in \autoref{fig:InfiniteButFiniteOrbits} generates an infinite self-similar semigroup (and also an infinite self-similar group), but all its orbits are finite.
    \begin{figure}[h]
      \begin{center}
        \begin{tikzpicture}[auto, shorten >=1pt, >=latex, node distance=1cm and 1.5cm]
          \node (dots1) {$\cdots$};
          \node[state, right=of dots1] (an) {$q_i$};
          \node[right=of an] (dots2) {$\cdots$};
          \node[state, right=of dots2] (a2) {$q_1$};
          \node[state, right=of a2] (a1) {$q_0$};
          \node[state, below=of a2] (id) {$\id$};
          
          \draw[->] (dots1) edge node {$2/2$} (an)
                    (an) edge node {$2/2$} (dots2)
                         edge[bend right] node[swap] {$0/0$, $1/1$} (id)
                    (dots2) edge node {$2/2$} (a2)
                    (a2) edge node {$2/2$} (a1)
                         edge node[swap] {$0/0$, $1/1$} (id)
                    (a1) edge[bend left] node {$0/1$, $1/0$, $2/2$} (id)
                    (id) edge[loop below] node {$0/0$, $1/1$, $2/2$} (id);
        \end{tikzpicture}%
      \end{center}%
      \caption{An infinite state $G$-automaton generating an infinite semigroup (and group) such that every orbit is finite.}\label{fig:InfiniteButFiniteOrbits}
    \end{figure}

    However, in the example of \autoref{fig:InfiniteButFiniteOrbits}, the orbits are uniformly bounded; in fact, they have size at most $2$. It is thus natural to ask if, for a self-similar semigroup, either the orbits are uniformly bounded by a constant or there exists an infinite orbit. This turns out to be false, as can be seen in the following example, suggested to us by Laurent Bartholdi.
      
    \begin{counterexample}\label{counterex:unboundedButOnlyFiniteOrbits}
      Let $\Sigma = \{0, 1, 2\}$ and $Q = \{\id, q_{ij} \mid 0 < i,\, 1 \leq j \leq i^2 \}$ and define the map $\tau: Q \times \Sigma \to \Sigma \times Q$ by $\tau(\id, a) = (a, \id)$ for all $a \in \Sigma$ and
      \begin{align*}
        \tau(q_{ij}, 0) &=
          \begin{cases}
            (1, q_{i(j-1)}) & \text{ if } j\equiv 1 \mod i\\
            (0, \id) & \text{ otherwise}
          \end{cases} \\
        \tau(q_{ij}, 1) &=
          \begin{cases}
            (0, q_{i(j-1)}) & \text{ if } j\equiv 1 \mod i\\
            (1, \id) & \text{ otherwise}
          \end{cases} \\
        \tau(q_{ij},2) &=
          \begin{cases}
          (2, \id) & \text{ if } j\equiv 1 \mod i\\
          (2, q_{i(j-1)}) & \text{ otherwise}
        \end{cases}
      \end{align*}
      (where we set $q_{i0}=\id$). This induces a $G$-automaton $\mathcal{T} = (Q, \Sigma, \delta)$ by
      \[
        \delta = \{ \trans{q}{a}{b}{p} \mid q, p \in Q, a, b \in \Sigma, \tau(q, a) = (b, p) \}
      \]
      A part of the automaton is represented in \autoref{figure:NonBoundedExample}. It is easy to see that the semigroup generated by $\mathcal{T}$ coincides with the group generated by $\mathcal{T}$.
      \begin{figure}[h]
        \begin{center}
          \begin{tikzpicture}[auto, shorten >=1pt, >=latex]
            \node[state] (e1) at (1,2) {$\id$};
            \node[state] (a11) at (-1,2) {$q_{11}$};
            \node[state] (e2) at (4,0) {$\id$};
            \node[state] (a21) at (2,0) {$q_{21}$};
            \node[state] (a22) at (0,0) {$q_{22}$};
            \node[state] (a23) at (-2,0) {$q_{23}$};
            \node[state] (a24) at (-4,0) {$q_{24}$};
            \node[state] (e3) at (4,-2) {$\id$};
            \node[state] (a31) at (2,-2) {$q_{31}$};
            \node[state] (a32) at (0,-2) {$q_{32}$};
            \node[state] (a33) at (-2,-2) {$q_{33}$};
            \node[state] (a34) at (-4,-2) {$q_{34}$};
            \node[state] (a35) at (-4,-4) {$q_{35}$};
            \node[state] (a36) at (-2,-4) {$q_{36}$};
            \node[state] (a37) at (0,-4) {$q_{37}$};
            \node[state] (a38) at (2,-4) {$q_{38}$};
            \node[state] (a39) at (4,-4) {$q_{39}$};
            \node (d) at (0,-5) {$\vdots$};
            \draw[->] (a11) edge [bend right] node [below] {$1/0$} (e1)
                      (a11) edge [bend left] node [above] {$0/1$} (e1)
                      (a24) edge node {$2/2$} (a23)
                      (a23) edge [bend right] node [below] {$1/0$} (a22)
                      (a23) edge [bend left] node {$0/1$} (a22)
                      (a22) edge node {$2/2$} (a21)
                      (a21) edge [bend right] node [below] {$1/0$} (e2)
                      (a21) edge [bend left] node [above] {$0/1$} (e2)
                      (a39) edge node {$2/2$} (a38)
                      (a38) edge node {$2/2$} (a37)
                      (a37) edge [bend right] node [above] {$1/0$} (a36)
                      (a37) edge [bend left] node {$0/1$} (a36)
                      (a36) edge node {$2/2$} (a35)
                      (a35) edge node {$2/2$} (a34)
                      (a34) edge [bend right] node [below] {$1/0$} (a33)
                      (a34) edge [bend left] node {$0/1$} (a33)
                      (a33) edge node {$2/2$} (a32)
                      (a32) edge node {$2/2$} (a31)
                      (a31) edge [bend right] node [below] {$1/0$} (e3)
                      (a31) edge [bend left] node [above] {$0/1$} (e3);
          \end{tikzpicture}
        \end{center}
        \caption{A part of the automaton form \autoref{counterex:unboundedButOnlyFiniteOrbits}. The state $\id$ appears multiple times and arrows that go from any state to the state $\id$ by fixing a letter are not drawn.}\label{figure:NonBoundedExample}
      \end{figure}

      We claim that all its orbits are finite, but that they are not uniformly bounded. We will first show the former. For this, consider some $\alpha \in \Sigma^\omega$. If $\alpha$ contains at most one letter from $\{ 0, 1 \}$, then it is easy to see that the orbit of $\alpha$ has size at most $2$ since all semigroup elements fix every occurrence of the letter $2$ in any word.
    
      Otherwise, we can write $\alpha = u a 2^{n - 1} b \beta$ for some $u \in \Sigma^*$, $a, b \in \{ 0, 1 \}$, $n > 0$ and $\beta \in \Sigma^\omega$. Notice that, from the definition, we have $q_{ij}\cdot(a 2^{n-1} b) = \id$ for all $1 \leq j \leq i^2$ with $i \neq n$. Therefore, for all $\bm{q} \in Q^+$ and $n > 0$, we have\enlargethispage{\baselineskip}
      \[
        \bm{q} \cdot a 2^{n-1} b \in \{ \id, q_{n1}, q_{n2}, \dots, q_{nn^2} \}^*
      \]
      and, thus, also
      \[
        \bm{q} \cdot u a 2^{n-1} b \in \{ \id, q_{n1}, q_{n2}, \dots, q_{nn^2} \}^* \text{.}
      \]
      Since the semigroup generated by $\mathcal{T}$ is composed of finitary automorphisms of $\Sigma^\omega$, it is locally finite (i.e. every finitely generated subsemigroup is finite). In particular, we obtain that the subsemigroup generated by $\{ a_{n1}, a_{n2}, \dots, a_{nn^2} \}$ in $\mathscr{S}(\mathcal{T})$ is finite. As we have $\bm{q} \circ \alpha = (\bm{q} \circ u a 2^{n - 1} b)[(\bm{q} \cdot u a 2^{n-1} b) \circ \beta]$, we obtain that the orbit of $\alpha$ is also finite.

      To see that the orbits are unbounded, it suffices to note that the action is transitive on all
      \[
        V_{i} = \{a_1 a_2 \dots a_{i^2} \in \Sigma^{i^2} \mid a_k \in \{0,1\} \text{ if } k\equiv 1 \bmod i, \quad a_k=2 \text{ otherwise}\}
      \]
      with $i > 0$.
    \end{counterexample}

    We have seen that the analogue of \autoref{cor:infiniteSemigroupsHaveInfiniteOrbits} -- that every infinite automaton semigroup admits a single $\omega$-word with an infinite orbit -- does not hold in the context of self-similar semigroups. Next, we will see that neither does the generalization to finitely generated subsemigroups given in \autoref{cor:infiniteSubsemigroupsHaveInfiniteOrbist}.

    For our discussion, let $\mathcal{T} = (Q, \Sigma, \delta)$ be an infinite state $S$-automaton and consider a finitely generated infinite subsemigroup of $\mathscr{S}(\mathcal{T})$. By taking the union of appropriate powers of $\mathcal{T}$, we can, without loss of generality, assume that it is generated by a finite subset $P$ of $Q$. We still know that, for every $n$, there is a word whose orbit under the action of $P^*$ has size at least $n$. However, in this case, there is not necessarily a single word with an infinite orbit.
    \begin{prop}\label{prop:counterExampleForFinitelyGeneratedSubsemigroupSelfSimilar}
      Consider the self-similar group generated by the infinite state $G$-automaton $\mathcal{T} = (Q, \Sigma, \delta)$
      \begin{center}
        \begin{tikzpicture}[auto, shorten >=1pt, >=latex]
          \node[state] (x0) at (3,0) {$q_0$};
          \node[state] (x1) at (0,0) {$q_1$};
          \node[state] (xn) at (-3.3,0) {$q_i$};
          \node[state] (a1) at (0,-3) {$p_1$};
          \node[state] (an) at (-3.3,-3) {$p_i$};
          \node (l1) at (-2.5,0) {$\cdots$};
          \node (l2) at (-5.5, 0) {$\cdots$};
          \node (l3) at (-2.5,-3) {$\cdots$};
          \node (l4) at (-5.5, -3) {$\cdots$};
          \node[state] (e) at (3,-3) {$\id$};
          \draw[->] (x0) edge node {$1/1$} (e)
                    (x0) edge node {$0/0$} (x1)
                    (x1) edge node {$0/0$} (l1)
                    (x1) edge node {$1/1$} (a1)
                    (xn) edge node {$0/0$} (l2)
                    (a1) edge node {$0/1$, $1/0$} (e)
                    (l3) edge node {$1/0$} (a1)
                    (an) edge [bend right] node [below] {$0/1$} (e)
                    (l4) edge node {$1/0$} (an)
                    (xn) edge node {$1/1$}(an)
                    (e) edge [loop right] node {$0/0$, $1/1$} (e);
        \end{tikzpicture}
      \end{center}
      (and its inverse). Then, both, the subsemigroup as well as the subgroup generated by $q_0$ (and possibly its inverse $q_0^{-1}$) are infinite, but all orbits $q_0^* \circ \alpha \subseteq \{ q_0, q_0^{-1} \}^* \circ \alpha$ are finite.
    \end{prop}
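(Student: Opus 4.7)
The plan is to unpack the actions of the two state families $q_i$ and $p_k$ explicitly and then read off both claims. From the picture, each $q_i$ fixes its input letter while the state evolves: $q_i$ sends $0 \mapsto 0$ with state $q_{i+1}$ and $1 \mapsto 1$ with state $p_i$. Each $p_k$ swaps $0 \leftrightarrow 1$ in the first letter and either deactivates ($p_k \to \id$ on input $0$) or decrements its index ($p_k \to p_{k-1}$ on input $1$, with the convention $p_0 = \id$). The crucial structural observation is that $p_k$ implements ``increment by $1$ modulo $2^k$'' on the first $k$ letters of its input (read least-significant-bit first) and fixes every subsequent letter. One verifies this directly by checking $p_k \circ 1^j 0 \eta = 0^j 1 \eta$ for $0 \leq j < k$ and $p_k \circ 1^k \eta = 0^k \eta$, which is exactly ``add-one-with-carry'' that either terminates at the first $0$ within the first $k$ bits or wraps silently around. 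In particular, $p_k^n \circ 0^\omega$ is the LSB-first binary expansion of $n \bmod 2^k$ padded to length $k$, followed by $0^\omega$, so $p_k$ has order exactly $2^k$ in both $\mathscr{S}(\mathcal{T})$ and $\mathscr{G}(\mathcal{T})$.

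Next, I would trace the behaviour of $q_0$. Any $\xi$ starting with $1$ sends $q_0$ immediately to $\id$, hence $q_0 \circ \xi = \xi$; the word $0^\omega$ is also fixed. Otherwise $\xi = 0^j 1 \eta$ with $j \geq 1$ and reading $0^j 1$ carries the state through $q_0, q_1, \dots, q_j$ to $p_j$, so $q_0 \circ 0^j 1 \eta = 0^j 1 (p_j \circ \eta)$. Iterating yields $q_0^n \circ 0^j 1 \eta = 0^j 1 (p_j^n \circ \eta)$ for every $n \geq 0$; by invertibility, the $\langle q_0 \rangle$-orbit of $0^j 1 \eta$ in the group agrees with $\{ 0^j 1 \zeta : \zeta \in \langle p_j \rangle \circ \eta \}$.

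Both halves of the statement fall out of this. Every orbit $q_0^* \circ \xi$ and every group orbit $\{q_0, q_0^{-1}\}^* \circ \xi$ is either a singleton (if $\xi$ starts with $1$ or equals $0^\omega$) or is contained in $\{ 0^j 1 \zeta : \zeta \in \langle p_j \rangle \circ \eta \}$, which has at most $2^j < \infty$ elements because $p_j$ has finite order. Conversely, to see that $q_0$ itself has infinite order, for any $n \neq m$ pick $j$ with $2^j > |n - m|$; then $p_j^n \circ 0^\omega$ and $p_j^m \circ 0^\omega$ are distinct length-$j$ binary prefixes, so $q_0^n$ and $q_0^m$ disagree on the single word $0^j 1 0^\omega$. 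Hence both the sub-semigroup $\{ q_0^n : n \geq 1 \}$ and the sub-group generated by $q_0$ are infinite. The only step requiring real insight is the identification of $p_k$ with the cyclic adder modulo $2^k$; the rest is short bookkeeping.
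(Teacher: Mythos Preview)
Your proof is correct and follows essentially the same route as the paper's: both factor $\xi = 0^n 1 \eta'$, observe that $q_0$ becomes $p_n$ after the prefix $0^n 1$, and bound the resulting orbit by $2^n$ since $p_n$ only touches the next $n$ letters; both then distinguish powers of $q_0$ on words $0^n 1 0^\omega$ with $n$ large. Your explicit identification of $p_k$ as ``increment modulo $2^k$'' makes the mechanism more transparent than the paper's terse ``bounded by $|\{0,1\}|^{|v|} = 2^n$'', but the underlying argument is identical.
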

    \begin{proof}
      It is clear that $q_0^{\pm 1}$ fixes $0^\omega$. Thus, let $\alpha$ be an $\omega$-word different to $0^\omega$ and write it as $\alpha = 0^n 1 \beta'$ with $n \geq 0$. We further factorize $\beta' = v \beta$ for $|v| = n$. From the definition of the automaton, we see that
      \[
        q_0^{\pm 1} \circ \alpha = q_0^{\pm 1} \circ 0^n 1 v \beta = 0^n 1 (p_n^{\pm 1} \circ v) \beta
      \]
      and
      \[
        q_0^{\pm i} \circ \alpha = 0^n 1 (p_n^{\pm i} \circ v) \beta \text{.}
      \]
      Therefore, the orbit $\{ q_0, q_0^{-1} \}^* \circ \alpha$ is bounded by $| \{ 0, 1 \}|^{|v|} = 2^n$ and, thus, finite. On the other hand, we can always choose $n$ large enough so that the actions of $q_0^i$ and $q_0^j$ differ on $0^n 1 0^n$ for $i \neq j$, showing that the subsemigroup and subgroup generated by $q_0$ is infinite.
    \end{proof}
    
    It is not surprising that there are infinitely many states reachable from the finite set $P = \{ q_0 \}$ in the automaton in \autoref{prop:counterExampleForFinitelyGeneratedSubsemigroupSelfSimilar} because, otherwise, $P$ would already be the subset of some finite state $S$-automaton. This observation allows us to formulate a result similar to \autoref{cor:infiniteSubsemigroupsHaveInfiniteOrbist}: every infinite subsemigroup of a self-similar semigroup generated by a finite subset $P$ of states such that there are only finitely many states reachable from $P$ admits an $\omega$-word with an infinite orbit under the action of the subsemigroup.
  \end{section}
  
  \section*{Acknowledgement}
    The authors would like to thank the anonymous referee for the helpful remarks that stirred a discussion leading to a greatly simplified proof for the main result.

\bibliographystyle{plainurl}
\bibliography{references}

\end{document}